\title{Simple Games versus Weighted Voting Games: Bounding the Critical Threshold Value\thanks{A partial answer to the conjecture of Freixas and Kurz appeared, together with some results of this paper, in an extended abstract published in the proceedings of SAGT 2018~\cite{HKKP18}.}}
\author{Frits Hof\inst{1} \and Walter Kern\inst{1} \and Sascha Kurz\inst{2} \and Kanstantsin Pashkovich\inst{3}\and Dani\"el~Paulusma\inst{4}}
\institute{University of Twente, The Netherlands, \texttt{f.hof@home.nl,w.kern@math.utwente.nl} 
\and  University of Bayreuth, Germany \texttt{sascha.kurz@uni-bayreuth.de}
\and University of Waterloo, Canada \texttt{kpashkovich@uwaterloo.ca}
\and Durham University, United Kingdom \texttt{daniel.paulusma@durham.ac.uk}}
\renewcommand{\le}{\leqslant}
\renewcommand{\ge}{\geqslant}
\renewcommand{\leq}{\leqslant}
\renewcommand{\geq}{\geqslant}
\renewcommand{\phi}{\varphi}
\newcommand{\R}{\mathbb{R}}
\newcommand{\W}{\mathcal{W}}
\newcommand{\Lo}{\mathcal{L}}
\newcommand{\0}{\boldsymbol{0}}
\newcommand{\1}{\boldsymbol{1}}
\newcommand{\conv}{\mathrm{conv}}
\newcommand{\ip}[2]{\langle{#1},{#2}\rangle}
\newcommand{\NP}{{\sf NP}}
\newtheorem{thm}{Theorem}
\newtheorem{conj}{Conjecture}
\newtheorem{rem}[thm]{Remark}
\newtheorem{ex}[thm]{Example}
\begin{document}
\maketitle

\begin{abstract}
\noindent
A simple game $(N,v)$ is given by a set $N$ of $n$ players and a partition of~$2^N$ into  a set~$\mathcal{L}$ of losing coalitions~$L$ with value 
$v(L)=0$ that is closed under taking subsets and a set $\mathcal{W}$ of winning coalitions $W$ with $v(W)=1$.  Simple games with 
$\alpha= \min_{p\geq 0}\max_{W\in {\cal W}, L\in {\cal L}} \frac{p(L)}{p(W)}<1$ are exactly the weighted voting games. We show that $\alpha\leq \frac{1}{4}n$ for every simple game $(N,v)$, confirming the conjecture of Freixas and Kurz (IJGT, 2014). For complete simple games, Freixas and Kurz conjectured that $\alpha=O(\sqrt{n})$. We prove this conjecture up to a $\ln n$ factor. We also prove that for graphic simple games, that is, simple games in which every minimal winning coalition has size~2, computing $\alpha$ is \NP-hard, but polynomial-time solvable if the underlying graph is bipartite. Moreover, we show that for every graphic simple game, deciding if $\alpha<a$ is polynomial-time solvable for every fixed $a>0$.
\end{abstract}

\section{Introduction}\label{sec_introduction}

Cooperative Game Theory provides a mathematical framework for capturing situations where subsets of agents  may form a coalition in order 
to obtain some collective profit or share some collective cost. Formally, 
a \emph{cooperative game  (with transferable utilities)} consists of a pair~$(N,v)$, where $N$ is a set of $n$ agents called \emph{players} and
$v: 2^N\to \R_+$ is a \emph{value function} that satisfies $v(\varnothing) = 0$.
In our context, the value $v(S)$ of a \emph{coalition} $S\subseteq N$ represents the profit for $S$ if all players in $S$ choose to collaborate with (only) each other. 
The central problem in cooperative game theory is to allocate the total profit $v(N)$ of the \emph{grand coalition}~$N$ to the 
individual players $i \in N$ in a {\lq\lq}fair{\rq\rq} way. To this end various \emph{solution concepts} such as the core, Shapley value or nucleolus 
have been designed; see~\cite{peters2008games} for an overview. 
For example, core solutions try to allocate the total profit such that every coalition
$S \subseteq N$ gets at least $v(S)$. This is of course not always possible, that is, the core might be empty. This leads to related questions
like: ``How much do we need to spend in total if we want to give at least $v(S)$ to each coalition $S \subseteq N$?''. In the specific case of simple
games (\emph{cf.} below) where $v$ takes only values $0$ and $1$, classifying coalitions into ``losing'' and ``winning'' coalitions, one may also ask:
``How much do we have to give in the worst case to a losing coalition if we want to give at least $v(S)=1$ to each winning coalition?''

As mentioned above, we study simple games.
 Simple games form a classical class of games, which are well studied;  see also the book of Taylor and Zwicker~\cite{taylor1999simple}.
The notion of being simple means that every coalition either has some equal amount of power or no power at all.  Formally, a cooperative game $(N,v)$ is \emph{simple} if  $v$ is a monotone 0--1 function with $v(\varnothing)=0$ and $v(N)=1$, so
$v(S)\in \{0,1\}$ for all $S\subseteq N$ and $v(S)\leq v(T)$ whenever $S\subseteq T$. 
 In other words, if $(N,v)$ is simple, then
there is a set $\mathcal{W} \subseteq 2^N$ of \emph{winning coalitions} $W$ that have value $v(W)=1$ and
a set $\mathcal{L} \subseteq 2^N$of \emph{losing coalitions} $L$ that have value $v(L)=0$. 
Note that $N\in {\cal W}$, $\varnothing \in {\cal L}$ and ${\cal W}\cup {\cal L}=2^N$. The monotonicity of~$v$ implies that
subsets of losing coalitions
are losing and supersets of winning coalitions are winning. A winning coalition~$W$ is {\it minimal} if every proper subset 
of $W$ is losing, and a losing coalition~$L$ is {\it maximal} if every proper superset of $L$ is winning.

A simple game is a \emph{weighted voting game} if there exists a payoff vector $p \in \R_+^n$  
such that a coalition~$S$ is winning if $p(S)\geq 1$ and losing if $p(S)<1$. 
Weighted voting games are also known as {\it weighted majority games} and form one of the most popular classes of simple games.

However, it is easy to construct simple games that are not weighted voting games. We give an example below, but in fact there are many important simple games that are not weighted voting games, and the relationship between weighted voting games and simple games is 
not yet fully understood. Therefore, Gvozdeva, Hemaspaandra, and Slinko~\cite{gvozdeva2013three} introduced a parameter $\alpha$, called the 
{\it critical threshold value}, to measure the {\lq\lq}distance{\rq\rq} of a 
simple game to the class of weighted voting games:
\begin{equation} \label{eq_minmax}
\alpha ~= ~\alpha(N,v)~=\min_{p \ge 0}~ \max_{\substack{W\in \mathcal{W}\\L\in \mathcal{L}}}~ \frac{p(L)}{p(W)}\,.
\end{equation}
\noindent
A simple game $(N,v)$ is a weighted voting game if and only if $\alpha<1$.
This follows from observing that each optimal solution $p$ of 
(\ref{eq_minmax}) can be scaled to satisfy $p(W)\ge 1$ for all winning coalitions $W$. The scaling enables us to reformulate the critical threshold value as follows:
\[
\alpha=\alpha(N,v)=\min_{p\in Q(\W)} \max_{L\in \Lo} p(L)\,,
\]
where
\[
Q(\W)=\{p\in\R^N\,|\, p(W) \ge \1 \text{  for  every } W\in\W,\, p\ge \0\}\,.
\]

The following concrete example of a simple game $(N,v)$ that is not a weighted voting game and that has in fact a large value of $\alpha$ was given  in~\cite{paper_alpha_roughly_weighted}:

\begin{ex}\label{example}
Let $N=\{1, \dots, n\}$ for some even integer~$n\geq 4$, and let the minimal winning coalitions be the pairs $\{1,2\}, \{2,3\}, \dots \{n-1, n\}, \{n,1\}$.
Then
\[
Q(\W)=\{p\in\R^N\,|\, p_1+p_2\geq 1,\, p_2+p_3\geq 1,\,\ldots,\, p_n+p_1\geq 1,\, p\ge \0\}\,.
\]
This means that $p(N)\geq \frac{1}{2}n$ for every $p\in Q(\W)$. 
Then, for every $p\in Q(\W)$ and for at least one of the two loosing coalitions $\{2,4,6,\dots, n\}$ or $\{1,3,5, \dots, n-1\}$, we have  
$p(L)\geq \frac{1}{4}n$, showing that $\alpha \ge \frac{1}{4}n$. On the other hand, it is easily seen that $p \equiv \frac{1}{2}$ satisfies $p(W) \ge 1$
for all  winning coalitions and $p(L) \le \frac{1}{4}n$ for all losing coalitions, showing that $\alpha \le \frac{1}{4}n$. Thus 
 $\alpha=\frac{1}{4}n$.
 \end{ex}
 
This example led the authors of~\cite{paper_alpha_roughly_weighted} to the following conjecture:

\begin{conj}[\cite{paper_alpha_roughly_weighted}]\label{conj}
For every simple game $(N,v)$, it holds that $\alpha \le \frac{1}{4}n$.
\end{conj}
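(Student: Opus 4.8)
The plan is to work with the scaled reformulation $\alpha=\min_{p\in Q(\W)}\max_{L\in\Lo}p(L)$ and to pass to its linear-programming dual. Writing $\alpha$ as the value of the (finite, feasible, bounded) LP that minimises $t$ subject to $p(W)\ge 1$ for all $W\in\W$, $p(L)\le t$ for all $L\in\Lo$, and $p\ge\0$, strong duality produces a pair of dual certificates: a probability distribution $(\mu_L)_{L\in\Lo}$ over the losing coalitions (the multipliers of the constraints $p(L)\le t$, which the free variable $t$ forces to satisfy $\sum_L\mu_L=1$) and nonnegative weights $(y_W)_{W\in\W}$ on the winning coalitions, satisfying the packing constraint $\sum_{W\ni i}y_W\le z_i$ for every player $i$, where $z_i:=\sum_{L\ni i}\mu_L$, together with $\sum_W y_W=\alpha$. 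In other words, $z=\sum_L\mu_L\chi_L$ is a convex combination of characteristic vectors $\chi_L$ of losing coalitions, and $y$ is a fractional packing of winning coalitions fitting under the load $z$.

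The crux is the observation that the complementary vector $\bar z:=\1-z$ is itself a feasible payoff vector, i.e. $\bar z\in Q(\W)$. Indeed $\0\le z\le\1$, and for every winning coalition $W$ one computes
\[
\bar z(W)=|W|-\sum_{i\in W}z_i=\sum_{L\in\Lo}\mu_L\,|W\setminus L|\ge\sum_{L\in\Lo}\mu_L=1,
\]
because a winning coalition is never contained in a losing one, so $|W\setminus L|\ge 1$. With this in hand the bound falls out of a short chain that couples the two certificates:
\[
\alpha=\sum_W y_W\le\sum_W y_W\,\bar z(W)=\sum_i (1-z_i)\sum_{W\ni i}y_W\le\sum_i (1-z_i)\,z_i\le\frac{n}{4},
\]
where the first inequality uses $\bar z(W)\ge 1$ and $y_W\ge 0$, the second uses the packing constraint $\sum_{W\ni i}y_W\le z_i$, and the last uses $z_i(1-z_i)\le\tfrac14$ for each of the $n$ coordinates.

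I expect the main obstacle to be conceptual rather than computational: one has to set up the correct dual and then spot the \emph{complementary load} trick, namely that replacing each losing load $z_i$ by $1-z_i$ turns the distribution $\mu$ into a genuine member of $Q(\W)$, so that the winning packing $y$ and the feasible vector $\bar z$ can be played against each other. Once this is seen, the final estimate is just the elementary inequality $t(1-t)\le\tfrac14$ applied coordinatewise, which is also exactly where the constant $\tfrac14$ --- and hence the tightness witnessed by Example~\ref{example} --- originates. A minor point to check is that the argument needs no separate treatment of singleton winning coalitions: if $\{i\}\in\W$ then monotonicity forces $i$ to lie in no losing coalition, whence $z_i=0$ and the relevant constraints hold automatically.
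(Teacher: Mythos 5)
Your proof is correct, and it takes a genuinely different route from the paper's. You dualize the finite LP $\min\{t \mid p(W)\ge 1\ \forall W\in\W,\ p(L)\le t\ \forall L\in\Lo,\ p\ge\0\}$ — the dual is set up correctly (the free variable $t$ forces $\sum_L\mu_L=1$, and the $p_i$-columns give the packing constraints $\sum_{W\ni i}y_W\le z_i$) — and then couple the two dual certificates via the observation that $\1-z\in Q(\W)$, which holds because no winning coalition is contained in a losing one; the chain $\alpha=\sum_W y_W\le\sum_i z_i(1-z_i)\le n/4$ is valid since $\0\le z\le\1$. The paper instead proves a \emph{strengthening}: it identifies $\Lo$ with complements of covers of $\W$, rewrites $\alpha$ as $\min_{p\in Q(\W)}\max\ip{p}{\1-q}$ over $0$--$1$ points $q\in Q(\W)$, and bounds the larger quantity where $q$ ranges over all of $Q(\W)$ by taking $p^\star$ to be the minimum Euclidean-norm point of $Q(\W)$: first-order optimality (Remark~\ref{rem_min_norm}) gives $\ip{p^\star}{q}\ge\ip{p^\star}{p^\star}$ for all $q\in Q(\W)$, whence $\max_{q\in Q(\W)}\ip{p^\star}{\1-q}=\ip{p^\star}{\1-p^\star}=n/4-\ip{\frac12\1-p^\star}{\frac12\1-p^\star}\le n/4$. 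So where the paper exhibits an explicit good payoff vector (the min-norm point) by a variational argument, you bound the optimum from the dual side, playing the distribution $\mu$ on losing coalitions against the fractional packing $y$ of winning coalitions; your ``complementary load'' step $\1-z\in Q(\W)$ is exactly the paper's blocker identity $\Lo=\{N\setminus C\mid C\in b(\W)\}$ transported to convex combinations, and in both proofs the constant $\frac14$ comes from the same coordinatewise inequality $t(1-t)\le\frac14$. What the paper's route buys: a concrete certificate $p^\star$, the stronger fractional statement of Theorem~\ref{thm_strength}, and the complementary-slackness machinery used afterwards to characterize when $\alpha=n/4$ (forcing $p^\star=\frac12\1$ and the convex-hull conditions). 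What your route buys: a shorter, purely finite LP-duality argument with no norm-minimization step, and as a by-product the slightly refined bound $\alpha\le\sum_i z_i(1-z_i)$ in terms of the optimal dual distribution over losing coalitions (the analogue of the paper's refinement $n/4-\lVert\frac12\1-p^\star\rVert_2^2$). Your closing remark about singleton winning coalitions is indeed superfluous — dual feasibility already handles it — but it is harmless.
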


\subsubsection*{Our Results.}
Section~\ref{sec_proof} contains our main result.
In this section we reformulate and strengthen Conjecture~\ref{conj} and then we prove the obtained strengthening. 

In  Section~\ref{sec_complete} we consider a subclass of simple games based on a natural desirability order~\cite{Is56}. A simple game $(N,v)$ is \emph{complete} if the players can be ordered by a complete, transitive ordering $\succeq$, say, $1 \succeq 2 \succeq \dots \succeq n$, indicating that higher ranked players have more "power" than lower ranked players.
More precisely, $i \succeq j$ means that $v(S\cup i) \ge v(S \cup j)$ for any coalition $S \subseteq N\backslash \{i,j\}$. The class of complete simple games properly contains all weighted voting games~\cite{FP08}. For complete simple games, we show an asymptotically upper bound on $\alpha$,  namely $\alpha=O(\sqrt{n}\ln n)$.
This bound matches,  up to a $\ln n$ factor, the lower bound of $\Omega(\sqrt{n})$ in~\cite{paper_alpha_roughly_weighted}, where the bound $\Omega(\sqrt{n})$ is
conjectured to be tight in~\cite{paper_alpha_roughly_weighted}. Intuitively, complete simple games are much closer to weighted voting games than 
arbitrary simple games. So, from this perspective, our result seems to support the hypothesis that $\alpha$ is indeed a sensible measure for the 
distance to weighted voting games. 

In Section \ref{sec_algo} we discuss some algorithmic and complexity issues. We focus on instances where all minimal winning coalitions have size~$2$.  We say that such simple games are {\it graphic}, as they can conveniently be described by a graph $G=(N,E)$
with vertex set $N$ and edge set $E~=~\{ij~|~\{i,j\}~ \text{is winning} \}$.
For graphic simple games we show that computing~$\alpha$ 
is \NP-hard in general, but
 polynomial-time solvable if the underlying graph $G=(N,E)$ is bipartite, or if $\alpha$ is known to be small  (less than a fixed number $a$). 

\medskip
\noindent
{\bf Related Work.}
Due to their practical applications in voting systems, computer operating systems and model resource allocation (see e.g.~\cite{BFJL02,CEW11}), structural and computational 
complexity aspects for solution concepts for weighted voting games have
been thoroughly investigated~\cite{ECJ08,EGGW09,freixas2011complexity,gvozdeva2013three,Pa18a}.

Another way to measure the distance of a simple game to the class of weighted voting games is to use the {\it dimension} of a simple game~\cite{TZ93}, which is the smallest number 
of weighted voting games whose intersection equals a given simple game. 
However, computing the dimension of a simple game is \NP-hard~\cite{DW06}, and
the largest dimension of a simple game with $n$ players is $2^{n-o(n)}$~\cite{high_dimension}. 
Moreover, $\alpha$ may be arbitrarily large for simple games with dimension larger than~1.
Hence there is no direct relation between the two distance measures.
Gvozdeva, Hemaspaandra, and Slinko~\cite{gvozdeva2013three} introduced two other distance parameters as well. One measures the power balance between small and large coalitions. The other one allows multiple thresholds instead of threshold~1 only. 

For graphic simple games, it is natural to take the number of players $n$ as the input size for answering complexity questions, but in general simple games may have different representations. For instance, one can list all minimal winning coalitions or all maximal losing coalitions. Under these two representations the problem of deciding if 
 $\alpha < 1$, that is, if a given simple game is a weighted voting game, is also polynomial-time solvable.
 This follows from results of \cite{hegedus1996geometric,peled1985polynomial}, as shown in~\cite{freixas2011complexity}.
The latter paper also showed that the same result holds if the representation is given by listing all winning coalitions or all losing coalitions. 

As mentioned, a crucial case in our 
study is when the simple game is graphic, that is, defined on some graph $G=(N,E)$.
In the corresponding \emph{matching game} a coalition $S 
\subseteq N$ has value $v(S)$ equal to the maximum size of a matching in the subgraph of $G$ induced by $S$. 
One of the most prominent solution concepts is the \emph{core} of a game, defined by $core(N,v) := \{p \in \R^n ~|~ p(N)=v(N), ~p(S) \ge v(S)~ \forall S  \subseteq N\}$. 
Matching games are not simple games. Yet their core constraints are readily seen to simplify to $p\ge 0$ and $p_i+p_j \ge 1$ for all $ij \in E$.
Classical solution concepts, such as the core and core-related ones like least core,  nucleolus or nucleon are
well studied for matching games, see, for example,~\cite{biro2012matching,bock2015stable,faigle1998,kern2003matching,KKT,SR94}. 

\section{The Proof of the Conjecture}~\label{sec_proof} 
To prove Conjecture~\ref{conj} we reformulate, strengthen and only then verify it. Our approach is inspired by the work of
Abdi, 
Cornu\'{e}jols and  
Lee on identically self-blocking clutters~\cite{A18}.
A coalition~$C\subseteq N$ is called a \emph{cover} of $\W$ if $C$ has at least one common player with every coalition in $\W$. We call the collection of covers of~$\W$ the \emph{blocker} of $\W$ and denote it by $b(\W)$\footnote{Usually, 
the notion of a blocker is defined as the collection of minimal covers, but for simplicity of exposition, we define it as the collection of all covers.} \cite{Edmonds70}. 
We claim that
\[
\Lo=\{ N\setminus C\,|\, C\in b(\W)\}\,.
\]
In order to see this, first suppose that there exists a cover $C\in b({\cal W})$ such that $N\setminus C\notin {\cal L}$. As ${\cal L}\cup {\cal W}=2^N$, this means that $N\setminus C\in {\cal W}$. 
However, as $C$ contains no player from $N\setminus C$, this contradicts our assumption that $C\in b({\cal W})$.  Now suppose that there exists a losing coalition $L\in {\cal L}$ such that $C=N\setminus L$ does not belong to $b({\cal W}$. Then, by definition, there exists a winning coalition $W\in {\cal W}$ with $C\cap W=\emptyset$.  As $C\cap W=\emptyset$, we find that $W\subseteq N\setminus C=L$. Then, by the monotonicity property of simple games, $L$ must be winning as well, a contradiction.

As $\Lo=\{ N\setminus C\,|\, C\in b(\W)\}$, the critical threshold value can be reformulated as follows
\begin{align*}
\alpha=\min_{p\in Q(\W)} \max_{L\in \Lo} p(L)=
 &\min_{p\in Q(\W)} \max_{C\in b(\W)} p(N\setminus C)=\\
&\min_{p\in Q(\W)} \max_{\substack{q\in Q(\W)\\q\in \{0,1\}^N}} \ip{p}{\1-q}\,.
\end{align*}
Here, $\ip{p}{q}$ stands for the scalar product of two vectors $p$ and $q$. To see the last equality, for a cover $C$ we can define a corresponding vector $q\in \{0,1\}^N\cap Q(\W)$ by setting $q_i=1$ if $i\in C$ and $q_i=0$ otherwise. 

\medskip
\noindent
{\bf Conjecture~\ref{conj} (reformulated)}
{\it For a simple game with $n$ players and the collection of winning coalitions $\W$, we have
\[
\min_{p\in Q(\W)} \max_{\substack{q\in Q(\W)\\q\in \{0,1\}^N}} \ip{p}{\1-q}\leq n/4\,.
\]}

\medskip
\noindent
Next, we prove Theorem~\ref{thm_strength}, which is a strengthening of Conjecture~\ref{conj}. For the proof we need the following straightforward remark, which we leave as an exercise. 
Here, we write $\left\lVert p \right\rVert_2=\sqrt{p_1^2+\ldots +p_n^2}$ for a vector $p\in \R^n$.

\begin{rem}\label{rem_min_norm}
Let $P$ be a polyhedron and let $p^\star$ be the optimal solution of the program $\min \{\left\lVert p \right\rVert_2\,|\, p\in P\}$. Then $p^\star$ is an optimal solution of the linear program $\min \{\ip{p^\star}{q}\,|\, q\in P\}$.
\end{rem}

\begin{thm}[Strengthening of Conjecture~\ref{conj}]\label{thm_strength}
For a simple game with $n$ players and the collection of winning coalitions $\W$, we have
\[
\min_{p\in Q(\W)} \max_{q\in Q(\W)} \ip{p}{\1-q}\leq n/4\,.
\]
In particular, if $p^\star$ is the optimal solution for the program
\[
\min \{\left\lVert p \right\rVert_2\,|\, p\in Q(\W)\}\,,
\]
then
\[
 \max_{q\in Q(\W)} \ip{p^\star}{\1-q}\leq n/4\,.
\]
\end{thm}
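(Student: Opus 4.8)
The plan is to prove the second, explicit assertion first, since the main inequality is then immediate: because $p^\star\in Q(\W)$, plugging $p=p^\star$ into the outer minimization gives
\[
\min_{p\in Q(\W)}\max_{q\in Q(\W)}\ip{p}{\1-q}\le\max_{q\in Q(\W)}\ip{p^\star}{\1-q}\,,
\]
so it suffices to bound the right-hand side by $n/4$. As a preliminary, I would check that $p^\star$ is well defined. The set $Q(\W)$ is a nonempty closed convex set: it contains $\1$, since every winning coalition is nonempty and hence $\1(W)=|W|\ge 1$. Therefore the Euclidean projection of the origin onto $Q(\W)$ exists, is unique, and lies in $Q(\W)$; this is our $p^\star$, and in particular it is an admissible choice in the outer minimum.

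The key step is to evaluate the inner optimization. Writing $\ip{p^\star}{\1-q}=\ip{p^\star}{\1}-\ip{p^\star}{q}$ and noting that the first term does not depend on $q$, maximizing over $q\in Q(\W)$ amounts to minimizing the linear functional $q\mapsto\ip{p^\star}{q}$ over $q\in Q(\W)$. This is exactly the situation of Remark~\ref{rem_min_norm} with $P=Q(\W)$: the minimizer is $p^\star$ itself, so the minimum value equals $\ip{p^\star}{p^\star}=\left\lVert p^\star\right\rVert_2^2$. I expect this to be the crux of the whole argument, and the only place where the special choice of $p^\star$ enters: it collapses what is a priori a linear program over the complicated polyhedron $Q(\W)$ into a closed form, and it is precisely why we minimize the $\ell_2$-norm rather than use an arbitrary feasible $p$.

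Substituting back yields the identity
\[
\max_{q\in Q(\W)}\ip{p^\star}{\1-q}=\ip{p^\star}{\1}-\left\lVert p^\star\right\rVert_2^2=\sum_{i\in N}\bigl(p^\star_i-(p^\star_i)^2\bigr)\,.
\]
At this point the estimate has separated completely across coordinates, so I would finish with the one-variable inequality $x-x^2\le\tfrac14$, valid for every real $x$ by completing the square ($x-x^2=\tfrac14-(x-\tfrac12)^2$). Summing over the $n$ coordinates gives $\sum_{i\in N}(p^\star_i-(p^\star_i)^2)\le n/4$, as required. I note that this final step needs no extra hypothesis on the entries of $p^\star$ (neither sign nor boundedness), since the bound $x-x^2\le\tfrac14$ holds for all reals; all the real content sits in the previous paragraph's use of the minimum-norm point.
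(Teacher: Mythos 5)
Your proof is correct and follows essentially the same route as the paper: both use the minimum-norm point $p^\star$ of $Q(\W)$ together with Remark~\ref{rem_min_norm} to identify $\max_{q\in Q(\W)}\ip{p^\star}{\1-q}$ with $\ip{p^\star}{\1-p^\star}$, and then complete the square (your coordinatewise bound $x-x^2\le\tfrac14$ is exactly the paper's identity $\ip{p^\star}{\1-p^\star}=\tfrac{n}{4}-\ip{\tfrac12\1-p^\star}{\tfrac12\1-p^\star}$). Your added verification that $p^\star$ exists and is unique, via $\1\in Q(\W)$ and projection onto a nonempty closed convex set, is a small point the paper leaves implicit.
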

\begin{proof}

Consider the unique optimal solution $p^\star$ for the program $\min \{\left\lVert p \right\rVert_2\,|\, p\in Q(\W)\}$. By Remark~\ref{rem_min_norm}, $p^\star$ is an optimal solution for the program $\min \{\ip{p^\star}{q}\,|\, q\in Q(\W)\}$. Thus, $p^\star$ is an optimal solution for the program $\max_{q\in Q(\W)} \ip{p^\star}{\1-q}$. Thus, we have
\[
\max_{q\in Q(\W)} \ip{p^\star}{\1-q}=\ip{p^\star}{\1-p^\star}=\frac{n}{4}-\ip{\frac{1}{2}\1-p^\star}{\frac{1}{2}\1-p^\star}\leq \frac{n}{4}\,,
\]
finishing the proof.\qed
\end{proof}

Let us discuss when Conjecture~\ref{conj} provides a tight upper bound for the critical threshold value. The next theorem shows that if the upper bound in Conjecture~\ref{conj} is tight, then this fact can be certified in the same way as in Example~\ref{example}.

\begin{thm}
For a simple game with $n$ players and the collection of winning coalitions $\W$ and the collection of losing coalitions $\Lo$, we have
\[
\alpha=\min_{p\in Q(\W)} \max_{L\in \Lo} p(L)= n/4
\]
if and only if $\frac{2}{n}\1$ lies in the convex hull of the characteristic vectors of winning coalitions and $\frac{1}{2}\1$ lies in the convex hull of the characteristic vectors of losing coalitions.
\end{thm}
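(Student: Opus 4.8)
The plan is to prove both implications, using Theorem~\ref{thm_strength} (which already gives $\alpha\le n/4$) for the bound and the min-norm point $p^\star$ for the equality case. Throughout, for a coalition $S$ let $\chi_S\in\{0,1\}^N$ denote its characteristic vector, so that $p(S)=\ip{p}{\chi_S}$; the convex hulls in the statement are those of the $\chi_W$ and $\chi_L$. I would first dispose of the \emph{if} direction, a short averaging argument. Suppose $\tfrac2n\1=\sum_W\mu_W\chi_W$ and $\tfrac12\1=\sum_L\nu_L\chi_L$ with $\mu,\nu\ge\0$ and $\sum_W\mu_W=\sum_L\nu_L=1$. For any $p\in Q(\W)$ the first identity gives $\tfrac2n\,p(N)=\sum_W\mu_W\,p(W)\ge\sum_W\mu_W=1$, hence $p(N)\ge n/2$; the second gives $\max_{L\in\Lo}p(L)\ge\sum_L\nu_L\,p(L)=\ip{p}{\tfrac12\1}=\tfrac12 p(N)\ge n/4$. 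Minimising over $p\in Q(\W)$ yields $\alpha\ge n/4$, and with Theorem~\ref{thm_strength} we get $\alpha=n/4$.

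For the \emph{only if} direction, assume $\alpha=n/4$ and let $p^\star$ be the min-norm point of $Q(\W)$. The inequality chain in the proof of Theorem~\ref{thm_strength} reads $\alpha\le\max_{q\in Q(\W)}\ip{p^\star}{\1-q}=\tfrac n4-\|\tfrac12\1-p^\star\|_2^2$, so $\alpha=n/4$ forces $p^\star=\tfrac12\1$. The first condition then drops out of the optimality (KKT) conditions for the projection of $\0$ onto $Q(\W)$: there are multipliers $\lambda_W\ge0$ and $\delta_i\ge0$ with $p^\star=\sum_W\lambda_W\chi_W+\sum_i\delta_i\ee_i$ and complementary slackness. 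Since every coordinate of $\tfrac12\1$ is positive, all $\delta_i=0$, so $\tfrac12\1=\sum_W\lambda_W\chi_W$; moreover $\lambda_W>0$ forces $p^\star(W)=1$, i.e.\ $|W|=2$. Comparing coordinate sums gives $\sum_W\lambda_W=n/4$, and rescaling by $4/n$ exhibits $\tfrac2n\1\in\conv(\W)$.

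The heart of the proof is the second condition, and here I would first record two structural consequences of the first condition. Writing $E$ for the set of winning coalitions of size $2$ (exactly the $W$ with $p^\star(W)=1$), the combination above, suitably rescaled, is a fractional perfect matching of $(N,E)$: weights $\mu'_e\ge0$ with $\sum_{e\ni i}\mu'_e=1$ for all $i$ and $\sum_e\mu'_e=n/2$. As every losing coalition is a stable set of $(N,E)$, each $L\in\Lo$ satisfies $|L|=\sum_e\mu'_e\,|e\cap L|\le\sum_e\mu'_e=n/2$, so $\max_{L}|L|\le n/2$. Next, applying the minimax theorem to $\alpha=\min_{p\in Q(\W)}\max_{x\in\conv(\Lo)}\ip{p}{x}$ produces a maximizer $w\in\conv(\Lo)$ with $\min_{p\in Q(\W)}\ip{p}{w}=n/4$. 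Testing with $\tfrac12\1\in Q(\W)$ gives $\tfrac12\ip{\1}{w}\ge n/4$, while $\ip{\1}{w}=\sum_L z_L|L|\le\max_L|L|\le n/2$; hence $\ip{\1}{w}=n/2$, and therefore $\ip{\tfrac12\1}{w}=n/4$ shows that $\tfrac12\1$ attains $\min_{p\in Q(\W)}\ip{p}{w}$.

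Finally I would combine these facts. Optimality of $\tfrac12\1$ for the linear program $\min_{p\in Q(\W)}\ip{p}{w}$, via the same KKT argument, writes $w=\sum_{e\in E}y_e\chi_e$ with $y\ge\0$; summing coordinates and using $\ip{\1}{w}=n/2$ gives $\sum_e y_e=n/4$. On the other hand $w\in\conv(\Lo)$ lies in the fractional stable-set polytope of $(N,E)$, so $w_a+w_b\le1$ for every $e=\{a,b\}\in E$. The decisive computation is then $\|w\|_2^2=\sum_e y_e(w_a+w_b)\le\sum_e y_e=n/4$, whence $\|w-\tfrac12\1\|_2^2=\|w\|_2^2-\ip{\1}{w}+\tfrac n4\le0$, so $w=\tfrac12\1$; as $w\in\conv(\Lo)$, this is precisely the second condition. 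The main obstacle is engineering this closing step: one must notice that the minimax maximizer $w$ is \emph{simultaneously} a nonnegative combination of size-$2$ winning-coalition vectors and a point of $\conv(\Lo)$, and that these two descriptions, together with $\ip{\1}{w}=n/2$, pin $w$ to $\tfrac12\1$ through the norm identity; proving $\max_L|L|\le n/2$ from the fractional perfect matching is the enabling technical ingredient that makes $\ip{\1}{w}=n/2$ available.
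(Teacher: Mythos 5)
Your proof is correct. Your \emph{if} direction and your derivation of the first condition essentially coincide with the paper's: the same averaging argument gives $\alpha\ge n/4$, and your KKT computation at the projection $p^\star=\frac{1}{2}\1$ is the paper's complementary-slackness argument in disguise (the paper takes an optimal dual solution $y^\star$ of $\min\{\ip{p^\star}{q}\,|\,q\in Q(\W)\}$ and rescales by $\frac{4}{n}$). Where you genuinely diverge is the second condition. The paper simply runs the min-norm trick a second time, now on $P=\conv\{r\in\{0,1\}^N\,|\,r\in Q(\W)\}$, the convex hull of characteristic vectors of covers: with $q^\star$ the min-norm point of $P$ (note $P\subseteq Q(\W)$), Remark~\ref{rem_min_norm} gives $\alpha\le\ip{q^\star}{\1-q^\star}=\frac{n}{4}-\bigl\lVert\frac{1}{2}\1-q^\star\bigr\rVert_2^2$, so $\alpha=n/4$ forces $q^\star=\frac{1}{2}\1\in P$, and the complementation $\Lo=\{N\setminus C\,|\,C\in b(\W)\}$ immediately places $\1-q^\star=\frac{1}{2}\1$ in the convex hull of losing coalitions --- two lines, no new objects. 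You instead produce a minimax maximizer $w\in\conv(\Lo)$, extract from the first condition a fractional perfect matching on the graph of size-$2$ winning coalitions to prove $\max_{L\in\Lo}|L|\le n/2$, deduce $\ip{\1}{w}=n/2$ and optimality of $\frac{1}{2}\1$ for $\min_{p\in Q(\W)}\ip{p}{w}$, and then pin $w=\frac{1}{2}\1$ by combining the complementary-slackness representation $w=\sum_{e}y_e\chi_e$ with the stable-set inequalities $w_a+w_b\le 1$ and the norm identity. All your steps check out: existence of $w$ follows from LP duality or Sion's theorem since $\conv(\Lo)$ is compact; $\frac{1}{2}\1\in Q(\W)$ is available because you have already shown $p^\star=\frac{1}{2}\1$; and the chain $\lVert w-\frac{1}{2}\1\rVert_2^2=\lVert w\rVert_2^2-\ip{\1}{w}+\frac{n}{4}\le 0$ is valid. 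As for what each approach buys: the paper's projection-onto-the-integer-hull argument is shorter and pleasingly symmetric with Theorem~\ref{thm_strength}, exploiting that losing coalitions are exactly complements of covers; your argument is longer but yields extra combinatorial structure in the tight case --- the size-$2$ winning coalitions form a graph admitting a fractional perfect matching, every losing coalition is a stable set of it of size at most $n/2$, and the extremal mixture of losing coalitions is forced to be $\frac{1}{2}\1$ --- information the paper's proof leaves implicit.
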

\begin{proof}
Clearly, if $\frac{2}{n}\1$ lies in the convex hull of the characteristic vectors of winning coalitions and $\frac{1}{2}\1$ lies in the convex hull of the characteristic vectors of losing coalitions, then for every $p\in Q(\W)$ we have
\[
\max_{L\in \Lo} p(L)\geq \ip{p}{\frac{1}{2}\1}= \frac{n}{4}\ip{p}{\frac{2}{n}\1}\geq \frac{n}{4}\,,
\]
showing that $\alpha\geq n/4$ and hence $\alpha= n/4$ by Theorem~\ref{thm_strength}.

On the other hand, from the proof of Theorem~\ref{thm_strength} we know that if $\alpha=n/4$ then $p^\star=\frac{1}{2}\1$ is an optimal solution for $\min \{\ip{p^\star}{q}\,|\, q\in Q(\W)\}$ with value $n/4$. Let us show that $\frac{2}{n}\1$ lies in the convex hull of the characteristic vectors of winning coalitions. To do that consider an optimal dual  solution $y^\star$ for the program $\min \{\ip{p^\star}{q}\,|\, q\in Q(\W)\}$. Using complementary slackness it is straightforward to show that $\frac{4}{n}y^\star$ provides coefficients of a convex combination of characteristic vectors of winning coalitions, where the convex  combination equals~$\frac{2}{n}\1$.

In the same way as the proof of Theorem~\ref{thm_strength}, we could show that
\[
\alpha\leq\max_{\substack{q\in Q(\W)\\q\in \{0,1\}^N}} \ip{q^\star}{\1-q}=\ip{q^\star}{\1-q^\star}=\frac{n}{4}-\ip{\frac{1}{2}\1-q^\star}{\frac{1}{2}\1-q^\star}\leq \frac{n}{4}\,,
\]
where $q^\star$ is the optimal solution for the program 
\[
\min \{\left\lVert q \right\rVert_2\,|\, q\in \conv\{r\in \{0,1\}^N\,|\,r\in Q(\W)\}\}\,.
\]
Thus, if $\alpha$ equals $n/4$, then $q^\star=\frac{1}{2}\1$ and $\frac{1}{2}\1$ lies in $\conv\{r\in \{0,1\}^N\,|\,r\in Q(\W)\}$. Hence, if $\alpha$ equals $n/4$, then $\1-q^\star=\frac{1}{2}\1$ lies in the convex hull of the characteristic vectors of losing coalitions, finishing the proof. \qed
\end{proof}

\section{Complete Simple Games}\label{sec_complete}

Intuitively, the class 
of complete simple games is {\lq\lq}closer{\rq\rq} to weighted voting games than general simple games. The 
next result quantifies this expectation.

\begin{theorem}
  \label{t-csg}
  For a complete simple game $(N,v)$, it holds that $\alpha\le  \sqrt{n}\ln n$. 
\end{theorem}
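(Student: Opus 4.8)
My plan is to reduce the problem to choosing a single non-increasing weight vector along the desirability order, to reformulate feasibility and objective in terms of prefix profiles, and then to exhibit an explicit weighting built from $O(\log n)$ geometric ``scales''.

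First I would show that, for a complete game, the minimisation defining $\alpha$ admits an optimal $p\in Q(\W)$ that is non-increasing along the desirability order, i.e.\ $p_1\ge p_2\ge\cdots\ge p_n$ when $1\succeq 2\succeq\cdots\succeq n$. This follows from a one-line exchange argument: if $i\succeq i+1$ but $p_i<p_{i+1}$, swapping the two coordinates keeps $p$ in $Q(\W)$ and does not increase $\max_{L\in\Lo}p(L)$. Indeed, for a winning $W$ containing $i+1$ but not $i$, the coalition $(W\setminus\{i+1\})\cup\{i\}$ is again winning by desirability and its $p$-value equals the value $W$ attains after the swap; the losing side is symmetric. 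Sorting by repeated adjacent swaps then yields a non-increasing optimum.

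Next I would pass to prefix profiles. Writing $[j]=\{1,\dots,j\}$ and $s_j(S)=|S\cap[j]|$, Abel summation gives, for non-increasing $p$,
\[
p(S)=\sum_{j=1}^n \Delta_j\, s_j(S),\qquad \Delta_j:=p_j-p_{j+1}\ge 0\ (p_{n+1}:=0),
\]
so $p$ is a non-negative combination $\sum_j\Delta_j\,\mathbf 1_{[j]}$ of prefix indicators. Setting $\ell_j=\max_{L\in\Lo}s_j(L)$, this sequence is itself a profile (non-decreasing with unit steps, $\ell_j\le j$), and $\max_{L\in\Lo}p(L)\le\sum_j\Delta_j\,\ell_j$. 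Thus it suffices to choose $\Delta_j\ge 0$ with $\sum_j\Delta_j\, s_j(W)\ge 1$ for every winning $W$ while $\sum_j\Delta_j\,\ell_j\le\sqrt n\,\ln n$. For the construction I would fix geometric scales $1=J_0<J_1<\cdots<J_T=n$ with $J_t\le 2J_{t-1}$ (hence $T=O(\log n)$) and put all mass on these prefixes: $\Delta_{J_t}=\delta_t:=\sqrt n/\ell_{J_t}$ (skipping scales with $\ell_{J_t}=0$) and $\Delta_j=0$ otherwise. Each term contributes $\delta_t\,\ell_{J_t}=\sqrt n$, so $\sum_t\delta_t\ell_{J_t}=(T+1)\sqrt n=O(\sqrt n\log n)$, which after fixing the base of the logarithm is at most $\sqrt n\ln n$. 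Feasibility is arranged by a covering argument: call a winning $W$ \emph{handled} at scale $t$ if $s_{J_t}(W)\ge \ell_{J_t}/\sqrt n$, for then the single term $\delta_t\,s_{J_t}(W)\ge 1$ already forces $p(W)\ge 1$. Everything therefore reduces to the claim that every winning coalition is handled by at least one scale.

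The main obstacle is exactly this covering claim: that no winning $W$ can satisfy $s_{J_t}(W)<\ell_{J_t}/\sqrt n$ simultaneously at all scales. I would argue by contradiction, using the profile characterisation of complete games: from such a scale-wise deficient $W$ one builds a losing coalition whose profile dominates that of $W$, which forces $W$ itself to be losing. Concretely, at each scale the deficiency $s_{J_t}(W)<\ell_{J_t}/\sqrt n$ leaves room, in the dyadic band $(J_{t-1},J_t]$, to insert enough of the players realising $\ell_{J_t}$; the factor $\sqrt n$ together with the geometric spacing of the $J_t$ supplies precisely the slack needed to stitch these band-wise choices into one coalition dominated by a maximal losing coalition, hence losing. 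Verifying that this stitching respects the unit-step constraints on profiles, and dispatching the routine points (scales with $\ell_{J_t}=0$, rounding $\delta_t$ while keeping $p\ge\mathbf 0$, and tracking constants so that $(T+1)\sqrt n\le\sqrt n\ln n$), would finish the proof. I expect the stitching step to be the delicate part: $\ell$ is only the \emph{pointwise} maximum of losing profiles and need not itself be realised by a single losing coalition, so the dominating losing coalition must be assembled scale by scale rather than read off directly.
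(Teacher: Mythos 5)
Your steps (1) and (2) are sound: the exchange argument for a non-increasing optimal $p$ and the Abel-summation reduction to prefix profiles with $\ell_j=\max_{L\in\Lo}|L\cap [j]|$ are both correct, and they are implicitly present in the paper's proof as well. The fatal problem is the covering claim, which is simply false for fixed geometric scales. Take $m=2^a+2$ with $m\approx n^{3/4}$ (so the interval $(2^a,m]$ contains no scale), let $A=\{1,\dots,m\}$ be the strong players, and declare $S$ winning iff $|S\cap A|\ge 2$, or ($|S\cap A|\ge 1$ and $|S\setminus A|\ge m$), or $|S\setminus A|\ge \lceil n/2\rceil$. This is a complete simple game (the winning region is up-monotone in the profile $(|S\cap A|,|S\setminus A|)$, and the required swap condition holds since $\lceil n/2\rceil-1\ge m$). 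Its maximal losing profiles are $(1,m-1)$ and $(0,\lceil n/2\rceil-1)$, so $\ell_j=1$ for $j\le m$ while $\ell_j\ge 2^a-1\approx m/2\gg\sqrt{n}$ for every $j\ge 2^{a+1}$. Now the winning coalition $W^\star=\{m-1,m\}$ satisfies $|W^\star\cap[J_t]|=0$ at every scale $J_t\le 2^a$ and $|W^\star\cap[J_t]|=2<\ell_{J_t}/\sqrt{n}\approx n^{1/4}/2$ at every scale $J_t\ge 2^{a+1}$: it is handled at no scale, and indeed your $p$ gives $p(W^\star)=O(\sqrt{n}\log n/m)=o(1)$, so $p\notin Q(\W)$. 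This also refutes the proposed stitching lemma itself, not merely its proof: $W^\star$ is scale-wise deficient everywhere yet winning, because domination in complete games is governed by \emph{all} prefixes, and the one prefix that keeps $W^\star$ winning ($j=m$, where $|W^\star\cap[m]|=2>\ell_m=1$) sits strictly inside a dyadic gap. Making the scales game-adaptive does not obviously rescue the scheme, since your accounting charges a flat $\sqrt{n}$ per scale and so can afford only $O(\log n)$ of them; and even granting the covering claim, ratio-$2$ scales give $(T+1)\sqrt{n}\approx\sqrt{n}\log_2 n>\sqrt{n}\ln n$, so the stated constant does not close either.

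The paper avoids any covering argument by a different choice of the increments. It sets $p_i=1/s_i$, where $s_i$ is the smallest size of a winning coalition inside the suffix $\{i,\dots,n\}$ (with $p_i=p_k$ past the last index $k$ for which $\{k,\dots,n\}$ wins). Feasibility up to a factor $\sqrt{n}$ is then automatic from a \emph{single} player: if $i$ is the strongest player of a winning $W$ with $|W|\le\sqrt{n}$, then $p(W)\ge p_i=1/s_i\ge 1/|W|\ge 1/\sqrt{n}$, and long coalitions are caught by the uniform floor $p_k\ge 1/n$. On the losing side, the paper's key structural inequality is exactly a bound on your $\ell$, namely $\ell_i\le s_i-1$ (a losing coalition with $s_i$ members among the top $i$ players would dominate a minimum-size winning coalition in $\{i,\dots,n\}$), and the total losing weight is then a \emph{telescoping harmonic} sum $(s_1-1)/s_1+\sum_{i\ge 2}(s_i-s_{i-1})/s_i\le\ln n$: each jump of $s$ costs only its harmonic increment, not a flat $\sqrt{n}$. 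That harmonic accounting is precisely the idea your construction is missing, and it is what lets the two factors multiply out to $\sqrt{n}\ln n$.
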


\begin{proof}
  Let $N=\{1,\dots,n\}$ be the set of players and assume without loss of generality that $1\succeq 2\succeq\dots \succeq n$.
  Let $k \in N$ be the largest number such that $\{k, \dots, n\}$ is winning.
  For $i=1, \dots, k$, let $s_i$ denote the smallest size of a winning coalition in $\{i, \dots, n\}$.
  Define $p_i:= 1/s_i$ for $i=1, \dots, k$ and $p_i:=p_k$ for $i=k+1,\dots,n$. Thus, obviously, $p_1 \ge \dots \ge p_k = \dots = p_n$. 
  
  Consider a winning coalition $W\subseteq N$ and let $i$ be the first player in $W$ (with respect to $\succeq$).
  If $|W| \le \sqrt{n}$, then $s_i \le |W| \le \sqrt{n}$ and hence $p(W) \ge p_i = \frac{1}{s_i} \ge \frac{1}{\sqrt{n}}$.
  On the other hand, if $|W| > \sqrt{n}$, then $p(W) > \sqrt{n}p_k \ge \sqrt{n}\frac{1}{n} =\frac{1}{\sqrt{n}}$.
  
  For a losing coalition $L \subseteq N$, we conclude that $|L \cap \{1, \dots , i\}| \le s_i-1$ (otherwise $L$ would dominate
  the winning coalition of size $s_i$ in $\{i, \dots ,n\}$). So $p(L)$ is bounded by 
  $\max \sum_{i=1}^{k} x_i\frac{1}{s_i} \text{~~subject to~~} \sum_{j=1}^{i}x_j \le s_i-1,~ i=1, \dots , k$.
  The optimal solution of this maximization problem is $x_1=s_1-1$ and $x_i=s_i-s_{i-1} \text{~for~} i=2,\dots  k$.
  Hence $p(L) \le (s_1-1)\frac{1}{s_1} + (s_2-s_1)\frac{1}{s_2} + \dots + (s_k-s_{k-1})\frac{1}{s_k}
       \le \frac{1}{2} +   \dots + \frac{1}{s_k} \le \ln n$.
  Summarizing, we obtain $p(L)/p(W) \le \sqrt{n} \ln n$.  \qed
   \end{proof}

In \cite{paper_alpha_roughly_weighted} it is conjectured that $\alpha = O(\sqrt{n})$ holds for complete simple games.
In the same paper a lower bound of order $\sqrt{n}$ is given,
as well as specific subclasses of complete simple games for which $\alpha=O(\sqrt{n})$ can be proven.

\section{Algorithmic Aspects}\label{sec_algo}

A fundamental question concerns the complexity of our original problem (\ref{eq_minmax}), i.e., the complexity of computing the critical threshold value of a simple game. For general simple games this depends on how the  game in question is given, and we refer to Section~\ref{sec_introduction} for a discussion.
Here we concentrate on the {\lq\lq}graphic{\rq\rq} case.

\begin{proposition}\label{p-bip}
Computing $\alpha_G$ for bipartite graphs $G$ can be done in polynomial time.
\end{proposition}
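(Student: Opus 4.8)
The plan is to rewrite $\alpha_G$ as an explicit linear program of polynomial size. First I would specialise the description of $Q(\W)$ to the graphic setting. Since every minimal winning coalition is an edge and $p \ge \0$, every constraint $p(W) \ge 1$ with $W \in \W$ is implied by the constraint coming from an edge contained in $W$; hence
\[
Q(\W) = \{p \in \R^N : p_i + p_j \ge 1 \text{ for all } ij \in E,\ p \ge \0\}.
\]
Dually, a coalition is losing exactly when it contains no edge, so the losing coalitions are precisely the independent sets of $G$. Consequently
\[
\alpha_G = \min_{p \in Q(\W)}\ \max_{I \text{ independent}} p(I),
\]
and for fixed $p$ the inner maximisation is a maximum-weight independent set problem with weights $p$.

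This inner problem is \NP-hard for general graphs --- which is exactly what underlies the hardness of computing $\alpha_G$ in the general case --- but it becomes tractable for bipartite $G$. The key fact I would invoke is that the fractional stable set polytope
\[
P(G) = \{x \in \R^N : x_i + x_j \le 1 \text{ for all } ij \in E,\ \0 \le x \le \1\}
\]
is integral precisely when $G$ is bipartite. Thus, for bipartite $G$, $\max_{I} p(I) = \max\{\ip{p}{x} : x \in P(G)\}$, turning the inner maximisation into an ordinary linear program.

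Next I would dualise this inner linear program. Introducing a variable $y_{ij} \ge 0$ for each edge inequality $x_i + x_j \le 1$ and a variable $z_i \ge 0$ for each bound $x_i \le 1$, LP duality gives
\[
\max_{x \in P(G)} \ip{p}{x} = \min\left\{ \sum_{ij \in E} y_{ij} + \sum_{i \in N} z_i :\ \sum_{j:\, ij \in E} y_{ij} + z_i \ge p_i \text{ for all } i,\ y,z \ge \0 \right\}.
\]
Substituting this into the outer minimisation merges the two nested minimisations; since $\min_p \min_{y,z} = \min_{p,y,z}$ and every coupling constraint is linear, the outcome is the single linear program
\[
\alpha_G = \min\left\{ \sum_{ij \in E} y_{ij} + \sum_{i \in N} z_i :\ p \in Q(\W),\ \sum_{j:\, ij \in E} y_{ij} + z_i \ge p_i \text{ for all } i,\ y,z \ge \0 \right\}.
\]
It has $O(n + |E|)$ variables and constraints and is therefore solvable in polynomial time, which would establish the proposition.

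The only non-routine ingredient is the integrality of $P(G)$ for bipartite graphs, which lets me replace the combinatorial inner maximum by a linear program and then dualise it; the collapse of the two minimisations into one is automatic. I expect this integrality step to be the real crux, since it is exactly what fails for non-bipartite graphs and accounts for the \NP-hardness of the general graphic case.
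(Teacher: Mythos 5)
Your proposal is correct, but it takes a genuinely different route from the paper's proof. The paper never dualises: it defines $P_\alpha := \{p \in P \mid p(L) \le \alpha \text{ for all independent } L \subseteq N\}$, observes that separation over $P_\alpha$ reduces to a maximum-weight independent set computation (polynomial-time in bipartite graphs), runs the ellipsoid method to test emptiness of $P_\alpha$, and finds $\alpha_G = \min\{\alpha \mid P_\alpha \ne \varnothing\}$ by binary search, justifying polynomial termination by noting that the optimal $\alpha$ of the exponential-size linear program~(\ref{eq_LP}) is determined by a basic solution and hence has polynomially bounded encoding size. You instead exploit that the fractional stable set polytope $P(G)$ of a bipartite graph is integral (via total unimodularity of the bipartite edge--vertex incidence matrix), replace the inner combinatorial maximum by an LP over $P(G)$, dualise it --- strong duality applies since that LP is feasible at $x=\0$ and bounded by $x \le \1$ --- and collapse the nested minimisations into one explicit LP with $O(n+|E|)$ variables and constraints. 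What each buys: your compact formulation dispenses with the ellipsoid method and binary search altogether and is solvable by any polynomial-time LP algorithm, which is arguably cleaner and more practical; the paper's oracle-based argument, on the other hand, generalises verbatim to any graph class where maximum-weight independent set is polynomial-time solvable --- the paper explicitly notes claw-free graphs --- for which no compact integral description of the stable set polytope is available, so your duality-based construction would not extend there.
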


\begin{proof}
 Let $P \subseteq \R^n$ be the set of feasible payoffs (satisfying $p \ge 0$ and $p_i+p_j \ge 1$ for $ij \in E$). For $\alpha \in \R$, let
 $P_\alpha := \{p\in P ~|~ p(L) \le \alpha \text{~for all independent ~} L \subseteq N\}$.
Thus $\alpha_G = \min \{\alpha ~|~P_\alpha \neq \varnothing\}$. The separation problem for $P_\alpha$ (for any given $\alpha$) is efficiently solvable.
Given $p\in \R^n$, we can check feasibility and whether $\max \{p(L) ~| ~L \subseteq N \text{~independent} \} \le \alpha $ by solving a 
corresponding maximum weight independent set problem in the bipartite graph $G$. Thus we can, for any given $\alpha \in \R$, apply the ellipsoid
method to either compute some $p \in P_\alpha$ or conclude that $P_\alpha =\varnothing$. Binary search then exhibits the minimum value for which 
$P_\alpha$ is non-empty; binary search works indeed in polynomial time as the optimal $\alpha$ has size polynomially bounded in $n$, which follows from observing that 
\begin{equation}\label{eq_LP}
\alpha = \min \{a ~|~ p_i+p_j \ge 1~ ~\forall ij \in E,~ p(L)-a \le 0 ~ ~\forall L \subseteq N \text{~independent}, p \ge 0\}
\end{equation}
can be computed by solving a linear system of $n$ constraints defining an optimal basic solution of the above linear program. \qed \end{proof}

The proof of Proposition~\ref{p-bip} also applies to other classes of graphs, such as claw-free graphs  (see~\cite{brand2016claw})
in which finding a  weighted maximum independent set is polynomial-time solvable. In general, the problem is \NP-hard.

\begin{proposition}
 Computing $\alpha_G$ for arbitrary graphs $G$ is \NP-hard. 
\end{proposition}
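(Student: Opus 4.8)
The plan is to give a polynomial reduction from a canonical \NP-hard graph problem. For a graphic simple game on $G=(N,E)$ the losing coalitions are exactly the independent sets of $G$, and $Q(\W)=\{p\ge\0 \mid p_i+p_j\ge 1 \text{ for all } ij\in E\}$ is the fractional vertex cover polytope, so that
\[
\alpha_G=\min_{p\in Q(\W)}\ \max_{L\ \text{independent}} p(L)\,.
\]
Thus $\alpha_G$ asks for a fractional vertex cover that keeps every independent set as light as possible, and its inner maximization is precisely a maximum-weight independent set problem. Since that inner problem is exactly what Proposition~\ref{p-bip} had to assume away (via bipartiteness), it is natural to expect the full minimization to inherit the \NP-hardness of \textsc{Independent Set}, and I would reduce from the problem of computing the independence number, which I denote $i(G)$. (An alternative encoding route, should a weighted version be more convenient, is to simulate vertex weights by replacing vertices with modules of false twins and symmetrizing $p$ across each module.)

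First I would record the trivial upper bound: the uniform cover $p\equiv\tfrac12$ lies in $Q(\W)$ and gives $p(L)=\tfrac12|L|\le\tfrac12 i(G)$, so $\alpha_G\le\tfrac12 i(G)$ for \emph{every} graph. The heart of the reduction is to build, from an instance of \textsc{Independent Set}, a graph $G$ on which this bound is tight, that is, $\alpha_G=\tfrac12 i(G)$, while the independence number remains \NP-hard to compute on the resulting family. Given such a construction, an oracle for $\alpha_G$ would compute $i(G)$, completing the reduction; the upper bound above is free, so all the real work lies in the matching lower bound $\alpha_G\ge\tfrac12 i(G)$.

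To prove $\alpha_G\ge\tfrac12 i(G)$ I would argue that no fractional vertex cover can beat $p\equiv\tfrac12$, using the same complementary-slackness/certificate mechanism as in the theorem characterizing $\alpha=n/4$. Concretely, by the minimax reformulation $\alpha_G=\max\{\, \min_{p\in Q(\W)}\ip{p}{\chi}\ \mid \chi\in\conv\{\chi_L: L\ \text{independent}\}\,\}$, it suffices to exhibit a distribution over \emph{maximum} independent sets whose mean $\bar\chi$ is a positive multiple of $\1$ and, in addition, is realizable as a nonnegative edge-degree vector, i.e.\ $\bar\chi\in\mathrm{cone}\{\ee_i+\ee_j : ij\in E\}$. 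The first condition makes $\ip{p}{\bar\chi}$ a fixed multiple of $\ip{p}{\1}$, so every cover is forced to be heavy overall, while the second is exactly the optimality certificate pinning $p\equiv\tfrac12$; together they give $\alpha_G=\tfrac12 i(G)$. Such a certificate exists, for instance, whenever $G$ is non-bipartite and symmetric enough that its maximum independent sets average to a multiple of $\1$ (odd cycles are the prototype, consistent with the bipartite case being the easy case in Proposition~\ref{p-bip}).

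The main obstacle is therefore the gadget design: I must transform an arbitrary \textsc{Independent Set} instance into a graph that simultaneously (i) retains \NP-hardness of its independence number and (ii) admits the symmetric, edge-cone-realizable averaging certificate above, so that $\alpha_G\ge\tfrac12 i(G)$ holds exactly. Balancing these two requirements---enough odd/symmetric structure to force $p\equiv\tfrac12$ without losing control over $i(G)$---is the crux, and is where a naive construction (such as a graph with pendant or bipartite parts, where $\alpha_G$ can be strictly below $\tfrac12 i(G)$) breaks down. The remaining steps---feasibility of $p\equiv\tfrac12$, the minimax duality, and the polynomial size of the construction---I expect to be routine.
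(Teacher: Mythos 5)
You have the right reduction target (computing the independence number) and the right trivial upper bound $\alpha_G\le\frac12 i(G)$, but your write-up has a genuine gap: the gadget you explicitly defer ({\lq\lq}the main obstacle is therefore the gadget design{\rq\rq}) is the entire content of the proof, and the route you sketch for finding it would likely not go through. Your certificate demands a convex combination of maximum independent sets whose mean $\bar\chi$ is a positive multiple of $\1$; that forces every vertex of the constructed graph to lie in maximum independent sets with uniform frequency, a strong vertex-transitivity-type symmetry, and the independence number is not known to remain \NP-hard on such symmetric families --- you would be trading one hard construction problem for another. Moreover, the multiple-of-$\1$ condition is stronger than necessary: if $\bar\chi$ is \emph{any} convex combination of characteristic vectors of independent sets of size $k$ that lies in $\mathrm{cone}\{\ee_i+\ee_j : ij\in E\}$, say $\bar\chi=\sum_{ij\in E} y_{ij}(\ee_i+\ee_j)$ with $y\ge 0$, then $\ip{\1}{\bar\chi}=k$ forces $\sum y_{ij}=k/2$, whence $\ip{p}{\bar\chi}=\sum y_{ij}(p_i+p_j)\ge k/2$ for every feasible $p$ --- no global symmetry is needed at all.

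The paper's gadget exploits exactly this slack and is much simpler than what you anticipate: given $G$ with independence number $k$, take two disjoint copies $G'$, $G''$ and join $i'$ to $j''$ whenever $i=j$ or $ij\in E$, so that each pair $i',i''$ consists of adjacent true twins. The independent sets of the resulting graph $G^*$ are exactly the sets $L_1'\cup L_2''$ where $L_1\cup L_2$ is independent in $G$, so $p\equiv\frac12$ certifies $\alpha_{G^*}\le k/2$; conversely, for any feasible $p^*$, the twin edge $i'i''$ forces $\max(p^*_{i'},p^*_{i''})\ge\frac12$, so choosing the heavier twin for each vertex of a maximum independent set $L$ of $G$ yields an independent set $L^*$ in $G^*$ with $p^*(L^*)\ge k/2$, giving $\alpha_{G^*}=k/2$ exactly. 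This pointwise {\lq\lq}pick the heavier twin{\rq\rq} argument is precisely the combinatorial form of your edge-cone certificate (take $\bar\chi=\frac12\sum_{i\in L}(\ee_{i'}+\ee_{i''})$, supported on the twin edges only), but it needs neither the minimax duality nor any symmetry of the input, which is why it works starting from an arbitrary \textsc{Independent Set} instance. Without this or an equivalent construction, your argument reduces the proposition to an unproven gadget-existence claim rather than proving it.
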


\begin{proof}
Let $G'=(N',E')$ and $G''=(N'',E'')$ be two disjoint copies of a graph $G=(N,E)$ with independence number~$k$. For each $i'\in N'$
 and $j''\in N''$ add  an edge $i'j''$ if and only if $i=j$ or $ij \in E$  and call the resulting graph $G^*= (N^*, E^*)$. 
 We claim that $\alpha_{G^*} = k/2$ (thus computing $\alpha_{G^*}$ is as difficult as computing~$k$).
  
 First note that the independent sets  in $G^*$ are exactly  the sets $L^* \subseteq N^*$ that arise from an independent set $L \subseteq N$ in $G$
 by splitting $L$ into two complementary sets $L_1$ and $L_2$ and defining $L^*:= L_1'\cup L_2''$.
 Hence, $p\equiv \frac{1}{2}$ on $N^*$ yields $\max p(L^*) = k/2$ where the maximum is taken over all independent sets $L^* \subseteq N^*$ in $G^*$. 
 This shows that $\alpha_{G^*} \le k/2$.
 
 Conversely, let $p^*$ be any feasible payoff in $G^*$, that is, $p^* \ge 0$ and $p^*_i+p^*_j \ge 1 $ for all $ij \in E^*$. Let $L \subseteq N$ 
 be a maximum independent set of size $k$ in $G$ and construct $L^*$ by 
 including for each $i \in L$ either $i'$ or $i''$ in $L^*$, whichever has $p$-value at least $\frac{1}{2}$. Then, by construction, $L^*$ is an 
 independent set in $G^*$ with $p^*(L^*) \ge k/2$, showing that $\alpha_{G^*} \ge k/2$. \qed
\end{proof}

Summarizing, for graphic simple games, computing $\alpha_G$ is
as least as hard as computing the size of a maximum independent in~$G$.
For our last result we assume that $a$ is a fixed integer, that is, $a$ is not part of the input.

\begin{proposition}
 For every fixed $a > 0$, it is possible to decide if $\alpha_G \leq a$ in polynomial time for an arbitrary graph $G=(N,E)$.
\end{proposition}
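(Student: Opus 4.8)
The plan is to reduce deciding $\alpha_G \le a$ to a non-emptiness test handled by the ellipsoid method, exactly as in Proposition~\ref{p-bip}, the only difficulty being the separation oracle. Recall from that proof the polyhedron $P_a := \{p \ge 0 \mid p_i + p_j \ge 1 \text{ for } ij\in E,\ p(L)\le a \text{ for all independent } L\}$, and that $\alpha_G \le a$ if and only if $P_a \neq \varnothing$. Since any payoff with $p_i > 1$ may be truncated to $p_i = 1$ without violating an edge constraint and without increasing any $p(L)$, I may restrict to $p\in[0,1]^n$, so the region is bounded; combined with the polynomial size bound on an optimal basic solution coming from (\ref{eq_LP}), the Gr\"otschel, Lov\'asz and Schrijver equivalence of separation and optimization reduces the whole task to designing a polynomial-time separation oracle for fixed $a$.

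The crux is that this oracle is the maximum-weight independent set problem, which is \NP-hard in general. The idea is to make it tractable by exploiting the self-imposed structure of any query point that has already passed the easy constraints. Given $p$, I would first check $p \ge 0$ and the edge inequalities $p_i + p_j \ge 1$, returning a violated one if it exists. Otherwise I set $A_p := \{i \mid p_i \ge 1/2\}$ and $B_p := \{i \mid p_i < 1/2\}$, and observe that the edge constraints force $B_p$ to be independent, since two adjacent vertices cannot both have payoff below $1/2$.

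With this (online, $p$-dependent) partition in hand, I would compute $\max_L p(L)$ over independent sets as follows. First test whether $G[A_p]$ contains an independent set of more than $2a$ vertices; any such set already has $p$-weight exceeding $a$, because each of its vertices contributes at least $1/2$, so it is returned as a violated inequality. Otherwise every independent subset $S \subseteq A_p$ has at most $\lfloor 2a\rfloor$ vertices, and there are only $O(n^{2a})$ of them. For each such $S$ the best independent extension into $B_p$ is $S \cup (B_p \setminus N(S))$ -- this is independent because $B_p$ is, and weight-maximal because $p \ge 0$ -- so $\max_L p(L) = \max_S\bigl(p(S) + p(B_p \setminus N(S))\bigr)$. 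This enumeration is exact, since every independent set $L$ decomposes into $L\cap A_p$, an independent subset of $A_p$, and $L\cap B_p \subseteq B_p \setminus N(L\cap A_p)$. If the maximum exceeds $a$ I return the witnessing set, and otherwise conclude $p\in P_a$.

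The main obstacle, and precisely the point that makes the argument go through, is the case distinction on $\alpha(G[A_p])$: it lets me avoid ever solving maximum-weight independent set on an unstructured graph, because a query point that would be hard to separate (a large independent set among the heavy vertices) is automatically already infeasible and trivially refutable. I would finish by noting that each oracle call runs in time $O(n^{2a+1})$, which is polynomial for fixed $a$, and that correctness of the decision then follows from the separation--optimization equivalence applied to $P_a$.
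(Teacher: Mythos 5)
Your proposal is correct, but it takes a genuinely different route from the paper's proof. The paper uses a global structural dichotomy: for $k=2\lceil a+\epsilon\rceil$ it tests by brute force in $O(n^{2k})$ time whether $G$ contains an induced $kP_2$ (an induced matching of $k$ edges); if so, every feasible payoff must assign value at least $\frac{1}{2}$ to one endpoint of each edge, and these endpoints form an independent set of weight $k/2>a$, so $\alpha_G>a$; if not, $G$ is $kP_2$-free, hence by Balas and Yu~\cite{BY89} it has only $n^{O(k)}$ maximal independent sets, which can be enumerated via Tsukiyama et al.~\cite{TIAS77}, so the linear program (\ref{eq_LP}) can be written down explicitly with polynomially many constraints and solved outright --- in this case the paper even computes $\alpha_G$ exactly rather than merely deciding $\alpha_G\le a$. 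You keep the exponentially many independent-set constraints implicit and attack the separation problem instead, deploying the same heavy/light observation (an edge forces an endpoint of weight at least $\frac{1}{2}$) per query point rather than structurally: once $p$ passes the trivial constraints, the light set $B_p$ is independent, every independent set decomposes as an independent subset $S\subseteq A_p$ plus $B_p\setminus N(S)$, and a query whose heavy part $A_p$ contains an independent set of more than $2a$ vertices is automatically infeasible with an explicit violated inequality --- so the oracle never faces an unstructured maximum-weight independent set instance. Your details check out: the extension $S\cup(B_p\setminus N(S))$ is weight-optimal since $p\ge 0$, the brute-force search for an independent set of size $\lfloor 2a\rfloor+1$ in $G[A_p]$ costs $n^{O(a)}$, and the truncation to $[0,1]^n$ together with the polynomial facet complexity of $P_a$ justifies invoking the separation--optimization equivalence at the same level of rigor the paper itself uses in Proposition~\ref{p-bip}. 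In terms of trade-offs: the paper's argument is purely combinatorial modulo one ordinary LP solve and needs no ellipsoid machinery for this proposition, but leans on the Balas--Yu bound and an enumeration algorithm as black boxes, while yours dispenses with those citations at the price of the GLS framework, and rests on the pleasant principle that any query point that would be hard to separate is already trivially refutable; both methods run in $n^{O(a)}$ time.
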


\begin{proof} 
Let $k= 2\lceil a+\epsilon \rceil$ for some $\epsilon >0$. By brute-force, we can check in $O(n^{2k})$ time if $N$ contains 
  $2k$ vertices $\{u_1,\ldots,u_k\}\cup \{v_1,\ldots,v_k\}$ that induce $k$ disjoint copies of $P_2$, that is, paths $P_i=u_iv_i$ of length $2$ for 
  $i=1,\ldots,k$ with no edges joining any two of these paths.
If so, then the condition $p(u_i)+p(v_i)\geq 1$ implies that one of $u_i,v_i$, say $u_i$, must receive a payoff $p(u_i)\geq \frac{1}{2}$,
and hence $U=\{u_1,\ldots,u_k\}$ has $p(U)\geq k/2 > a$. As $U$ is an independent set,  $\alpha(G)>a$.

Now assume that $G$ does not contain $k$ disjoint copies of $P_2$ as an induced subgraph, that is,
 $G$ is $kP_2$-free. For every $s\geq 1$, the number of maximal independent sets in a $sP_2$-free graphs is $n^{O(s)}$ 
due to a result of Balas and Yu~\cite{BY89}. Tsukiyama, Ide, Ariyoshi, and Shirakawa~\cite{TIAS77} show how to enumerate 
all maximal independent sets of a graph $G$ on $n$ vertices and $m$ edges using time $O(nm)$ per independent set. Hence we can find all maximal 
independent 
sets of $G$ and thus solve, in polynomial time, the linear program~(\ref{eq_LP}). Then it remains to check if the 
solution found satisfies $\alpha\leq a$. \qed
\end{proof}
  
\section{Conclusions}\label{sec_conclusion}

We have strengthened and proven the conjecture of~\cite{paper_alpha_roughly_weighted} on simple games (Conjecture~\ref{conj}) and showed a number of computational complexity results for graphic simple games. Moreover, we considered complete simple games and proved a stronger upper bound for this class of games. It remains to tighten the upper bound for complete simple games to $O(\sqrt{n})$ if possible. In order to classify simple games, many more subclasses of simple games have been identified in the literature. Besides the two open problems, no optimal bounds for $\alpha$ are known for other subclasses of simple games, such as \textit{strong}, \textit{proper}, or \textit{constant-sum} games, that is, where $v(S)+v(N\backslash S)\ge 1$, $v(S)+v(N\backslash S)\le 1$,  or $v(S)+v(N\backslash S)= 1$ for all $S\subseteq N$, respectively.\\

\smallskip
\noindent
{\it Acknowledgments.}  The second and fifth author thank P\'eter Bir\'o and Hajo Broersma for fruitful discussions on the topic of the paper. The fourth author thanks Ahmad Abdi  for valuable and helpful discussions.

 \bibliographystyle{abbrv}

  \end{document}